\newenvironment{lyxcode}
{\par\begin{list}{}{
\setlength{\rightmargin}{\leftmargin}
\setlength{\listparindent}{0pt}
\raggedright
\setlength{\itemsep}{0pt}
\setlength{\parsep}{0pt}
\normalfont\ttfamily}%
 \item[]}
{\end{list}}
\newtheorem{theorem}{Theorem}
\newtheorem{lemma}{Lemma}
\newtheorem{corollary}{Corollary}
\begin{document}

\title{Inferring Program Transformations from Type Transformations for Partitioning of Ordered Sets}

\author{Wim Vanderbauwhede}

\maketitle

\begin{abstract}
In this paper I introduce a mechanism to derive program transformations
from order-preserving transformations of vector types. The purpose
of this work is to allow automatic generation of correct-by-construction
instances of programs in a streaming data processing paradigm suitable
for FPGA processing. We show that for it is possible to automatically
derive instances for programs based on combinations of opaque element-processing
functions combined using \emph{foldl} and \emph{map}, purely from
the type transformations. 
\end{abstract}

\section{Introduction}

In this discussion paper I want to introduce a set of \emph{type transformations}
on \emph{vector types}. In this work we will use a simple form of dependent types \cite{bove2009dependent}, but the concept can be generalised to transformations
on other types, including session types \cite{honda2008multiparty}. The overall idea is to introduce
the transformations and then explore the effect of transforming the
types in a program on the program itself, i.e. what are the required
corresponding functions that will transform the types of the computations
while preserving the results of the computations.

The purpose of this work is to allow automatic generation of correct-by-construction instances of programs in a streaming data processing paradigm suitable for data processing using FPGAs (Field Programmable Gate Arrays, \cite{vanderbauwhede2013high}). Using an optimisation technique such as simulated annealing \cite{aarts1988simulated} and a cost model for the FPGA implementation, the best instance can be automatically selected. 

\section{Preliminaries}

\subsection{Type Variables}
\begin{description}
\item [{\emph{a}}] is a type variable representing a nullary type constructor.
We will call this kind of type variable \emph{atomic}.
\item [{\emph{b},\emph{c}}] are general type variables. We call the set
of these type variables $\mathcal{B}$
\item [{\emph{k},\emph{l,m},\emph{n}}] are non-zero natural numbers, i.e.
$k,m,n\in\mathbb{N}_{>0}$. We will call these \emph{sizes}.
\item [{\emph{p},\emph{q}}] are (unary) type constructor variables, i.e.
types that apply to other types and can depend on non-zero natural
numbers, e.g. \emph{p n a} or \emph{p k q q b}. Note that we use unary
(i.e. one type and one size), right-associative type constructors
purely to simplify the discussion, not as a fundamental limitation.
The crucial point is however that these are \emph{dependent types}. 
\item [{\emph{F},\emph{G},\emph{H}}] are general functions operating on
types (we could call them type transformers). We assume the functions
take a single type as argument and are right associative. Consequently,
we can write \emph{G (F a)} as \emph{G F a}
\item [{\emph{S},\emph{M},\emph{R}~and~\emph{I}}] are specific functions
operating on types, to de defined later.
\end{description}

\subsection{Notations and Definitions}
\begin{description}
\item [{\emph{total~size}}] The \emph{total size} of a type is the product
of all sizes:
\end{description}
$\mathcal{N}(p_{1}\, n_{1}\, p_{2}\, n_{2}...p_{i}...p_{k}\, n_{k}\; a)\overset{{\scriptscriptstyle \triangle}}{=}\prod_{i=1}^{k}n_{i}$
\begin{description}
\item [{\emph{type~transformation}}] A \emph{type transformation} is the
application of a function from one type to another to a type, i.e.
$F\, b\,=\, c$
\item [{\emph{i},\emph{j}}] are used as subscripts to distinguish between
type variables of the same class, so that we can write $p_{1}\; p_{2}...p_{i}...p_{k}\; a$
\end{description}

\section{Restrictions on Type Transformations\label{Restrictions-on-Type}}

Given a type expression of the form $p_{1}\, n_{1}\, p_{2}\, n_{2}...p_{i}...p_{k}\, n_{k}\; a$,
the type transformations we want to consider must obey following restrictions:
\begin{enumerate}
\item The transformations do not remove any atomic type variables or introduce
fresh atomic type variables. It follows that atomic type variables
cannot be modified either. 
\item The transformations can only remove or add one or more outer type
constructors. 
\item The type transformations can transform the \emph{sizes}, but the total
size of the type is an invariant. 
\end{enumerate}
Although the types and transformations are more general, our focus
is on transformations of types describing ordered sets. The above
restrictions intend to reflect that the type transformations should
not alter the number or nature of the elements of the set, but only
the way the set is partitioned.

\section{Vector Types}

For general types $p_{i}$ it may be hard to prove that the above
rules do not alter the number or nature of the elements of the set,
but only the way the set is partitioned. However, if we assume a single
type representing a vector, then what these restrictions say is that
a vector can only be reshaped but not modified in terms of its type
or size. This is of course not a sufficient condition to guarantee
that the type transformations will not change the computations, but
it is a necessary one.

We introduce the \emph{vector type} $\upsilon\, k\, b$ where \emph{k}
is a non-zero positive integer and \emph{b} is an arbitrary type.
This the type representing a vector of length \emph{k} containing
values of type \emph{b}. Specifically we define

\[
b\overset{{\scriptscriptstyle \triangle}}{=}\upsilon\,0\, b
\]

Given an atomic type \emph{a}, we can generate the set of all vector
types $V(a)$ for $a$ as follows:

\[
\begin{cases}
a\in V(a)\\
\forall\, b\,\in\, V(a),\forall\, k\, in\,\mathbb{N}_{>0}|\upsilon\, k\, b\in V(a)
\end{cases}
\]

For convenience, we introduce the following notation: 

\emph{$\upsilon\, k\, b\overset{not.}{=}[b]\langle k\rangle$}

And the shorthand:

\emph{$[...[[b]\langle k_{1}\rangle]\langle k_{2}\rangle...]\langle k_{n}\rangle\overset{not.}{=}[b]\langle k_{1}\rangle\langle k_{2}\rangle...\langle k_{n}\rangle$}

\section{Transformations on Vector Types }

For the rest of the paper we consider a specific case of types and
type transformations: transformations on \emph{vector types}. I posit
three fundamental transformations, each with a corresponding inverse:
\begin{itemize}
\item converting a type to a singleton vector type
\item applying a type transformation to the type variable of a vector type
(\emph{mapping})
\item reshaping a vector type, i.e. modifying the sizes in a vector type
such that the total size is remains invariant
\end{itemize}
We can formalise each of these transformations:

\subsection{Singleton vector type }

The purpose of this operation is to change the dimensionality of a
vector. 

$S\, b\,\overset{{\scriptscriptstyle \triangle}}{=}\,[b]\langle1\rangle$

The inverse operation (reducing dimensionality) is defined trivially
as

$S^{-1}\,[b]\langle1\rangle\,\overset{{\scriptscriptstyle \triangle}}{=}\, b$

So that

$S^{-1}S\, b\,=\, b$

$S\, S^{-1}\,[b]\langle1\rangle\,=\,[b]\langle1\rangle$

Repeated application of S leads to higher-dimensional singleton vectors:

$S\, S\, b=[[b]\langle1\rangle]\langle1\rangle$

We introduce a convenient notation

$S^{k}\, b=[b]\langle1\rangle^{k}$

\subsection{Mapping}

Mapping applies a transformation to the type argument of a vector
type.

This operation is independent of the size, so I have omitted it:

$M\, F\,[b]\,\overset{{\scriptscriptstyle \triangle}}{=}\,[F\, b]$

Note that the inverse operation is the application of the inverse
of \emph{F}, not of \emph{M}:

$M\, F^{-1}\,[F\, b]\,=\,[b]$

Although of course we can define purely notationally

$M^{-1}\, F\overset{{\scriptscriptstyle \triangle}}{=}M\, F^{-1}$

Repeated application of \emph{M} has two cases. The first case is
applying different transformations to a 1-D vector

$(M\, F)\,(M\, G)\,[b]=[F\, G\, b]$

We can rewrite the lhs as

$(M\, F)\,(M\, G)\,[b]=M\,(F\, G)\,[b]$

The second case is applying a single transformation to a multi-dimensional
vector

$M\,(M\, F)\,[[b]]=[M\, F\,[b]]=[[F\, b]]$

We can rewrite the lhs as

$M\,(M\, F)\,[[b]]=M^{2}\, F\,[[b]]$

\subsection{Reshaping}

The purpose of this operation is to re-partition a vector. The operation
works on a 2-D vector.

$R\, m\,[[b]\langle n_{1}\rangle]\langle n_{2}\rangle\overset{{\scriptscriptstyle \triangle}}{=}[[b]\langle n_{1}.m\rangle]\langle n_{2}/m\rangle$

The condition on \emph{m} is of course that \emph{n/m} is a natural
number, i.e. $n_{2}$ is a multiple of \emph{m}.

The inverse operation can again be defined notationally:

$R\, m^{-1}\,[[b]\langle n_{1}\rangle]\langle n_{2}\rangle=[[b]\langle n_{1}/m\rangle]\langle n_{2}.m\rangle$

and

$R^{-1}\, m\overset{{\scriptscriptstyle \triangle}}{=}R\, m^{-1}$

\subsection{Identity Operation}

We define $I\, b\overset{{\scriptscriptstyle \triangle}}{=}b$ for
completeness. I contend (but have not formally proven) that the set
of operations \emph{S},\emph{M},\emph{R},\emph{I} form a group over
V(a). Each of the operations is associative and can be inverted, and
any combination of operations on a vector type results in a vector
type, i.e. it is closed as well. By adding \emph{I}, the conditions
for a group are satisfied.

In fact, as we shall show below, \emph{S},\emph{M},\emph{R},\emph{I}
form a group over a particular finite subset of \emph{V(a)}: 
\begin{itemize}
\item define \emph{V(a,n)} as the subset of \emph{V(a)} where, for any given
vector $[a]\langle k_{1}\rangle\langle k_{2}\rangle...\langle k_{m}\rangle$,
$\prod_{i=1}^{m}k_{i}=n$.
\item then \emph{S,M,R,I} form a group over $V(a,n),\forall a,\, n$
\end{itemize}
In the next section, we will give a proof of the closure constraint.

\subsection{Operations on Atomic Types}

We define mapping on or reshaping of an atomic type as identity operations:

\[
M\, F\, a\overset{{\scriptscriptstyle \triangle}}{=}a
\]

\[
R\, k\, a\overset{{\scriptscriptstyle \triangle}}{=}a
\]

\subsection{Vector Creation}

For what follows, we will need an invertible operation $V$ to create
vector types:

$V\, k\, b\overset{{\scriptscriptstyle \triangle}}{=}[b]\langle k\rangle$

with its inverse

$V^{-1}\, k\,[b]\langle k\rangle\overset{{\scriptscriptstyle \triangle}}{=}b$

In other words, V is equivalent to the vector type constructor but
has an inverse. We use this operation to formally extract the argument
of a vector type from the constructor. The dependent variable k is
not strictly speaking necessary:

$V^{-1}\, k\,(\upsilon\, k\, b)=b$

However, for any vector type $c$, $V^{-1}\, k\, c$ will result in
a type error unless $c=\upsilon\, k\, b$.

\subsection{Theorem: \emph{V(a,n)} is closed and complete under \emph{S,M,R}}

\begin{theorem}

Any type transformation on any vector type in \emph{V(a,N)} that observes
the rules from Section \ref{Restrictions-on-Type}:
\begin{enumerate}
\item can be expressed as a combination of the operations $S$, $M$ and
$R$, and 
\item results in a vector type in \emph{V(a,N).}
\end{enumerate}
\end{theorem}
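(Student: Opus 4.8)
The plan is to prove both parts simultaneously by structural induction on the shape of the vector type, exploiting the fact that the restrictions from Section \ref{Restrictions-on-Type} severely constrain what a legal transformation can do: it preserves the single atomic variable $a$, it may only add or strip outer $\upsilon$-constructors, and it must preserve the total size $N$. So a legal transformation sends $[a]\langle k_1\rangle\cdots\langle k_m\rangle$ (with $\prod k_i = N$) to $[a]\langle l_1\rangle\cdots\langle l_{m'}\rangle$ with $\prod l_j = N$, and I must show any such map is realised by a composition of $S$, $M$, $R$. Part 2 then follows immediately, since the target is manifestly an element of $V(a,N)$, provided I also check (closure) that each of $S$, $M$, $R$ individually maps $V(a,N)$ into $V(a,N)$ — this last check is routine from the definitions, since $S$ adds a size-$1$ factor, $M\,F$ acts trivially on $a$ and hence only permutes/relabels inner structure without touching sizes, and $R$ replaces $n_1,n_2$ by $n_1 m,\ n_2/m$ whose product is unchanged.

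The constructive core is a normal-form argument. First I would show that, using $S$ and $R$, any vector $[a]\langle k_1\rangle\cdots\langle k_m\rangle$ in $V(a,N)$ can be driven to a canonical form, say the flat one-dimensional vector $[a]\langle N\rangle$ (or, if $N=1$, the bare $a$ reached by $S^{-1}$): repeatedly apply $R$ at the outermost two levels to coalesce adjacent sizes $\langle k_i\rangle\langle k_{i+1}\rangle$ into $\langle k_i k_{i+1}\rangle$, reducing the dimension by one each time, and use $S^{-1}$ to remove spurious size-$1$ wrappers. Since $S$ and $R$ are invertible (the excerpt gives $S^{-1}$, $R^{-1}=R\,m^{-1}$), this means every element of $V(a,N)$ is connected to the canonical form by an invertible $S$/$R$-word. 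Hence any transformation $T:\;u_1 \mapsto u_2$ between two elements of $V(a,N)$ factors as (word from $u_1$ to canonical form) followed by (inverse of the word from $u_2$ to canonical form), which is a composition of $S$, $S^{-1}$, $R$, $R^{-1}$ — all expressible via $S$, $M$, $R$ once one notes $R^{-1}m = R\,m^{-1}$ is already an $R$ and $S^{-1}$ on a size-$1$ vector can be simulated by the stated identities. This handles the ``can be expressed'' half whenever the transformation is between specified concrete types.

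The subtle step, and the one I expect to be the main obstacle, is handling transformations that genuinely involve the \emph{mapping} operation $M$ and transformations stated parametrically rather than on a fixed type — i.e. arguing that the only thing a rule-abiding transformation can do ``inside'' a constructor is itself a rule-abiding transformation of a smaller $V(a,N')$, so that $M$ is exactly the right tool and the induction closes. Concretely, I would peel off the outermost constructor: if the source is $[b]\langle k\rangle$ with $b\in V(a,N/k)$, a legal transformation either (a) acts on the outer layer only (realisable by $S^{\pm}$, $R$ as above) or (b) restricts to a legal transformation of $b$, which by the induction hypothesis on the smaller total size $N/k$ is a word in $S,M,R$, and is then lifted by one application of $M$. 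Making ``restricts to a legal transformation of $b$'' precise — in particular ruling out transformations that entangle the outer size with the inner structure in a way not decomposable this way — is where the real work lies; I would lean on restriction 2 (only \emph{outer} constructors may be added or removed) to argue such entanglement is forbidden, so the peeling is well-defined. The base case $N=1$, where $V(a,1)$ consists of the singleton-tower types $a, [a]\langle1\rangle, [a]\langle1\rangle\langle1\rangle,\dots$ all interconvertible by $S^{\pm 1}$, is immediate.
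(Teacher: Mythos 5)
Your proposal is correct and takes essentially the same route as the paper: both normalise an arbitrary element of $V(a,N)$ to the flat form $[a]\langle N\rangle$ by repeatedly coalescing the outer two sizes with $R$ and stripping the resulting size-$1$ wrapper with $M\,S^{-1}$, then run the process in reverse to build the target, so that any two elements of $V(a,N)$ are connected through the canonical form. The additional subtlety you flag about parametric transformations involving $M$ is not treated in the paper either, which implicitly identifies a rule-abiding transformation with its source/target pair and therefore needs only this connectedness argument.
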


\begin{proof}

~
\begin{itemize}
\item The most general expression for a type in our system is and multidimensional
vector of type \emph{a}, where the size in every dimension is different.
We consider two instances of this type:

\begin{eqnarray*}
T_{1=}[a]\langle n_{1}\rangle\langle n_{2}\rangle...\langle n_{i}\rangle...\langle n_{k}\rangle\\
T_{2}=[a]\langle m_{1}\rangle\langle m_{2}\rangle...\langle m{}_{j}\rangle...\langle m_{l}\rangle
\end{eqnarray*}

where

\[
\prod_{i=1}^{k}n_{i}=\prod_{j=1}^{l}m_{j}=N
\]

and in general $k\neq l$ and the various $n_{i}$ and $m_{j}$ values
can be different.

\item We aim to show that $T_{1}$ can be transformed into $T_{2}$ through
application of a combination of the operations \emph{S}, \emph{M}
and \emph{R}. Our approach is to first reduce $T_{1}$ to a one-dimensional
vector of size N, and then transform this vector into $T_{2}$.
\item First we reduce the expression for $T_{1}$ to $[a]\langle N\rangle$as
follows

\begin{enumerate}
\item First reshape the outer two vectors through application of \emph{R}:

\begin{eqnarray*}
 &  & R^{-1}\, m_{k-1}\; T_{1}\\
= &  & R^{-1}\, m_{k-1}\;[a]\langle n_{1}\rangle\langle n_{2}\rangle...\langle n_{i}\rangle...\langle n_{k-1}\rangle\langle n_{k}\rangle\\
= &  & [a]\langle n_{1}\rangle\langle n_{2}\rangle...\langle n_{i}\rangle...\langle1\rangle\langle n_{k-1}.n_{k}\rangle
\end{eqnarray*}

which can also be written as

\[
[[[a]\langle n_{1}\rangle\langle n_{2}\rangle...\langle n_{i}\rangle...\langle n_{k-2}\rangle]\langle1\rangle]\langle n_{k-1}.n_{k}\rangle
\]

\item Then apply $S^{-1}$ to the type of the outer vector:

\begin{eqnarray*}
 &  & M\, S^{-1}\,[[[a]\langle n_{1}\rangle\langle n_{2}\rangle...\langle n_{i}\rangle...\langle n_{k-2}\rangle]\langle1\rangle]\langle n_{k-1}.n_{k}\rangle\\
= &  & [S^{-1}\,[[a]\langle n_{1}\rangle\langle n_{2}\rangle...\langle n_{i}\rangle...\langle n_{k-2}\rangle]\langle1\rangle]\langle n_{k-1}.n_{k}\rangle\\
= &  & [[a]\langle n_{1}\rangle\langle n_{2}\rangle...\langle n_{i}\rangle...\langle n_{k-2}\rangle]\langle n_{k-1}.n_{k}\rangle
\end{eqnarray*}

\item Repeating these steps results in

\[
[a]\langle n_{1}.n_{2}.\ldots.n_{i}.\,...\,.n_{k}\rangle
\]

which can be written as

\[
[a]\langle N\rangle
\]

\end{enumerate}
\item Then we perform the reverse process to obtain $T_{2}$:

\begin{enumerate}
\item First increase the dimensionality by calling \emph{S} on the type
of the outer vector

\[
M\, S\,[a]\langle N\rangle=[[a]\langle1\rangle]\langle N\rangle
\]

\item Then reshape the outer two vectors using \emph{R}

\[
R\, m_{1}\,[[a]\langle1\rangle]\langle N\rangle=[[a]\langle m_{1}\rangle]\langle N/m_{1}\rangle
\]

\item Repeating these steps results in
\end{enumerate}

\[
[a]\langle m_{1}\rangle\langle m_{2}\rangle...\langle m{}_{j}\rangle...\langle m_{l}\rangle=T_{2}
\]

\end{itemize}
\end{proof}

\begin{corollary} 
An type transformation consisting of a combination of the transformations \emph{S},\emph{M} and \emph{R} is reversible.
\end{corollary}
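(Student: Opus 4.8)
The plan is to show that any composite transformation built from $S$, $M$, and $R$ has an inverse that is itself a composite of $S$, $M$, and $R$, so that the transformation is reversible within $V(a,N)$. The key observation is that each of the three generators has been equipped, earlier in the excerpt, with an inverse expressible in the same vocabulary: $S^{-1}$ acts on singleton vectors, $R^{-1}m = R\,m^{-1}$ is a reshape, and for mapping the inverse is $M\,F^{-1}$, which uses the inverse of the inner transformation rather than a new operator. So the proof is essentially a structural induction on the length of the composite, using the standard fact that $(G\circ H)^{-1} = H^{-1}\circ G^{-1}$.

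First I would set up the statement precisely: let $T = F_n \circ F_{n-1} \circ \cdots \circ F_1$ where each $F_i$ is one of $S$, $M\,G_i$ (for some transformation $G_i$ already known to be reversible), or $R\,m_i$, and $T$ maps a vector type $t \in V(a,N)$ to another vector type $t' \in V(a,N)$ (the fact that $t'$ lies in $V(a,N)$ is Part 2 of the Theorem). I would then exhibit the candidate inverse $T^{-1} = F_1^{-1} \circ \cdots \circ F_n^{-1}$, where $S^{-1}$, $(M\,G_i)^{-1} = M\,G_i^{-1}$, and $(R\,m_i)^{-1} = R\,m_i^{-1}$ are the inverses defined in Sections on Singleton vector type, Mapping, and Reshaping respectively. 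The base case $n=0$ is the identity $I$, which is reversible trivially. For the inductive step I would peel off the outermost operation and compose, checking that $F_i^{-1}F_i$ restricted to the relevant type equals the identity — which is exactly what the defining equations $S^{-1}S\,b = b$, $M\,F^{-1}[F\,b] = [b]$, and the analogous reshape identity give us.

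There are two points that need a little care rather than being purely mechanical. The first is that $S^{-1}$ is only defined on singleton vector types $[b]\langle 1\rangle$, so I must argue that at each stage the argument actually has the shape required by the operation being undone; this is guaranteed because we are undoing the transformations in the reverse of the order in which they were applied, so the intermediate type at each step is precisely the one that the forward operation produced. The second is the mapping case: since the inverse of $M\,G$ is $M\,G^{-1}$ and not some "$M^{-1}$", the induction must be on the full syntax tree of transformations, with the inner $G$'s themselves assumed reversible by the inductive hypothesis (or built from $S$, $M$, $R$ and hence covered). The main obstacle, such as it is, is bookkeeping: making the claim "the composite of inverses, in reversed order, is built from $S$, $M$, $R$" airtight requires being explicit that $M\,G^{-1}$ counts as an $M$-operation applied to the reversible transformation $G^{-1}$, and that nested $M$'s (the $M^k\,F$ notation from the Mapping section) compose and invert in the same way. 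Once that is granted, reversibility of the composite follows immediately by collapsing $F_i^{-1}F_i$ from the inside out.
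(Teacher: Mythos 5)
Your proof is correct, but it takes a different route from the paper. The paper offers no separate argument for the corollary: it reads reversibility off the theorem just proved. Since any combination of $S$, $M$, $R$ applied to a type $T_1 \in V(a,N)$ lands on some $T_2 \in V(a,N)$ (closure), and since the theorem's constructive proof shows that \emph{any} type in $V(a,N)$ can be transformed into \emph{any other} via the flatten-to-$[a]\langle N\rangle$-and-rebuild procedure (completeness), there is in particular a combination of $S$, $M$, $R$ taking $T_2$ back to $T_1$; that is the paper's reverse transformation. You instead construct the \emph{literal} inverse by structural induction: reverse the order of the steps and replace each generator by its defined inverse ($S^{-1}$, $M\,G^{-1}$, $R\,m^{-1}$), checking that each inverse is applied to exactly the type its forward counterpart produced, so the partiality of $S^{-1}$ (singletons only) and of $R$ (divisibility of the outer size) never bites. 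The trade-off: the paper's argument is essentially free once the theorem is in hand, but the "reverse" it supplies is only guaranteed to undo the transformation on the particular source type, and it need not be the step-by-step undo of the given combination; your argument is self-contained (it does not need the completeness half of the theorem at all), yields the canonical syntactic inverse, and is the version one would actually want when deriving the reverse \emph{program} transformation in later sections. Your care about $M\,G^{-1}$ being an $M$-operation applied to a reversible $G^{-1}$, so that the inverse stays inside the $\{S,M,R\}$ vocabulary, is a point the paper glosses over entirely.
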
 

\section{Program Transformations}

In this section we want to explore how transforming a top-level type
impacts on the program. The context is the  FPGA architecture developed for the TyTra project\footnote{http://tytra.org.uk/}, which is similar to the MORA architecture \cite{chalamalasetti2009mora}, and
we consider a simple pipeline of computations. Like MORA, and indeed most FPGA architectures, TyTra assumes that
data is streamed, and we model this as a map over a vector.

I will use the notation $t_{s}$ or $t(s)$ to mean ``the type of
s'', and denote type-transformed functions and variables using a
prime, e.g. \emph{s'}.

I will use the notation \emph{\{b\}} to indicate ``\emph{b} may have
to be transformed'' during the inference process.

\subsection{General Assumption on the Program}

The general assumption is that the program is built entirely out of 
\begin{itemize}
\item a set of functions on atomic types $f_{j}\,::\, a_{i}\rightarrow a_{k}$
\item the \emph{fold} operation \cite{hutton1999tutorial}
\item the \emph{cons} (\emph{:}) operation
\item tuple construction \emph{(,)}
\end{itemize}
However, we immediately note that \emph{foldl} can be defined in terms
of \emph{fold} and the identity function \emph{id}:

\[
foldl\, f\, v\, xs=foldr\,(\lambda x\, g\rightarrow(\lambda a\rightarrow g\,(f\, a\, x)))\, id\, xs\, v
\]

and \emph{map} in terms of \emph{fold} and \emph{cons}:

\[
map\, f=foldr\,(\lambda v\, x\rightarrow(x:v))
\]

and the same goes for all basic list operations, so we take those
as given.

Essentially, our purpose is to split the program in computational
functions and functions which describe the communication. Based on
the type transformations, we aim to derive the transformation of those
higher-order functions. We start with a few exploratory examples using
\emph{map} and \emph{foldl}.

\subsection{Increasing dimensionality -- map}

We assume a very simple program, we use Haskell syntax \cite{hudak1992report} augmented with
the \texttt{$\left\langle N\right\rangle $} notation to indicate
sizes.
\begin{lyxcode}
$s\,::\,[a]\left\langle N\right\rangle $

$g\,::\,[a]\left\langle N\right\rangle \rightarrow[a]\left\langle N\right\rangle $

$r\,::\,[b]\left\langle N\right\rangle $

$f\,::\, a\rightarrow a$

$r\,=\, g\, s$

$g\,=\, map\, f$
\end{lyxcode}
For completeness:
\begin{lyxcode}
$map\,::\,(t_{1}\rightarrow t_{2})\rightarrow[t_{1}]\left\langle n\right\rangle \rightarrow[t_{2}]\left\langle n\right\rangle $

\end{lyxcode}
We transform the top-level type:

$t_{s}'$\emph{$=R\, k\, M\, S\,$}$t_{s}$ \emph{$=R\, k\, M\, S\,[a]\left\langle N\right\rangle $$=[[a]\left\langle k\right\rangle ]\left\langle N/k\right\rangle $}

As \emph{g} is applied to \emph{s}, this leads to a transformation
of \emph{g}:

$t_{g'}$= $t_{s'}\rightarrow t?$

We assume that we only explicitly transform each of the arguments
of \emph{g}.

Then we get:
\begin{lyxcode}

$g'\,::\, t_{s}'\rightarrow[b]\left\langle N\right\rangle $

$g'\,=\, map\, f'$

\end{lyxcode}
We can substitute $t_{1}$ by the actual type of \emph{s' }in the
\emph{map} inside \emph{g'}:
\begin{lyxcode}
$r'\,=\, g'\, s'\,=\, map_{g'}\, f'\, s'$

$map_{g'}\,::\,\left(V^{-1}\,(N/k)\, t_{s'}\right)\rightarrow\left\{ b\right\} \rightarrow t_{s'}\rightarrow\left\{ [b]\left\langle N\right\rangle \right\} $

\end{lyxcode}
Clearly, this type can't work for \emph{map} because the return type
\emph{$[b]\left\langle N\right\rangle $} has a different size from
$t_{s'}$. So we need to transform that type:

$t_{2'}=R\, k\, M\, S\,[b]\left\langle N\right\rangle =[b\left\langle k\right\rangle ]\left\langle N/k\right\rangle $ 

This means that the signature for the map in \emph{g'} becomes
\begin{lyxcode}
$map_{g'}\,::\,\left(V^{-1}(N/k)\, t_{s'}\rightarrow V^{-1}(N/k)\, t_{2'}\right)\rightarrow t_{s'}\rightarrow t_{2'}$
\end{lyxcode}
So that
\begin{lyxcode}
$g'\,::\, t_{s'}\rightarrow t_{2'}$
\end{lyxcode}
Consequently
\begin{lyxcode}
$t_{r'}=t(g'\, s')=t_{2'}$

$\rightarrow$
\end{lyxcode}
Rewriting the above in a more systematic way:
\begin{lyxcode}
g'~s'~=~map~f'~s'

map$_{g'}$$\,::\,(t_{1}$~$\rightarrow$~$t_{2}$$)$~$\rightarrow$~$[t_{1}]$~$\rightarrow$~$[t_{2}]$

map$_{g'}$~::~(\emph{\{a\}}~->~\emph{\{{[}b{]}<k>\}})~->~{[}{[}a{]}<k>{]}<N/k>~->~\emph{\{{[}b<N>\}}

map$_{g'}$~::~(\emph{\{}$V^{-1}\, N/k$~\emph{{[}{[}a{]}<k>{]}<N/k>\}}~->~\emph{\{{[}b{]}<k>\}})~->~{[}{[}a{]}<k>{]}<N/k>~->~\emph{\{{[}b<N>\}}

map$_{g'}$~::~({[}a{]}<k>~->~\emph{\{{[}b{]}<k>\}})~->~{[}{[}a{]}<k>{]}<N/k>~->~\emph{\{{[}b{]}<N>\}}

map$_{g'}$~::~({[}a{]}<k>~->~\emph{\{{[}b{]}<k>\}})~->~{[}a{]}<k{*}m>~->~$R\, k\, M\, S$~\emph{\{{[}b{]}<N>\}}

map$_{g'}$~::~({[}a{]}<k>~->~\emph{\{{[}b{]}<k>\}})~->~{[}a{]}<k{*}m>~->~{[}b<k>{]}<N/k>

map$_{g'}$~::~({[}a{]}<k>~->~\emph{\{}$V^{-1}\, N/k$~\emph{{[}b<k>{]}<N/k>\}})~->~{[}a{]}<k{*}m>~->~{[}b{]}<k{*}m>

map$_{g'}$~::~({[}a{]}<k>~->~b<k>)~->~{[}a{]}<k{*}m>~->~{[}b{]}<k{*}m>

$\Rightarrow$~f'~::~{[}a{]}<k>~->~b<k>

$\Rightarrow$~r'~::~{[}b<k>{]}<N/k>
\end{lyxcode}
In other words, we can infer the return type from the single type
transformation. What we have so far is
\begin{lyxcode}
$s'\,::\,[a']\left\langle N'\right\rangle $

$g'\,::\,[a']\left\langle N'\right\rangle \rightarrow[b']\left\langle N'\right\rangle $

$r'\,::\,[b']\left\langle N'\right\rangle $

$f'\,::\, a'\rightarrow b'$

$r'\,=\, g'\, s'$

$g'\,=\, map\, f'$
\end{lyxcode}
where
\begin{lyxcode}
$\mbox{type}\, a'\,=\,[a]\left\langle k\right\rangle $

$\mbox{type}\, b'\,=\,[b]\left\langle k\right\rangle $

$N'\,=\, N/k$
\end{lyxcode}
What we need now is the transformations between \emph{s} and \emph{s'}
and \emph{f} and \emph{f'}

To transform \emph{s}:
\begin{lyxcode}
$s'\,=\,\mbox{reshapeTo}\, k\, s$
\end{lyxcode}
where
\begin{lyxcode}
$\mbox{reshapeTo}\,::\,\mbox{Int}\, k\Rightarrow k\rightarrow[a]\left\langle n\right\rangle \rightarrow[[a]\left\langle k\right\rangle ]\left\langle n/k\right\rangle $
\end{lyxcode}
We define the inverse for further use:
\begin{lyxcode}
$\mbox{reshapeFrom}\,::\,\mbox{Int}\, k\Rightarrow k\rightarrow[a\left\langle k\right\rangle ]\left\langle n\right\rangle \rightarrow[a]\left\langle n.k\right\rangle $
\end{lyxcode}
So the \emph{R k M S t(s)} type transformation maps directly to \emph{reshapeTo
k s}

The transformation from \emph{f} to \emph{f'} is even more straightforward,
because the transformation of the original type raises the dimensionality
\begin{lyxcode}
$f'\,=\, map\, f$
\end{lyxcode}
In general, the original map is replaced by maps over both dimensions.
\begin{lyxcode}
$g'\,=\, map\,map\, f$
\end{lyxcode}
and in full, the transformed program becomes

\begin{lyxcode}
$r\,=\, (reshapeFrom \,k) \; .\; (map\,map\, f) \;.\; (reshapeTo\,k)\,  s$
\end{lyxcode}

\subsection{Reducing the dimensionality -- map}

Assume we have
\begin{lyxcode}
\textrm{$s\,::\,[[a]\left\langle k\right\rangle ]\left\langle m\right\rangle $}

\textrm{$g\,::\,[[a]\left\langle k\right\rangle ]\left\langle m\right\rangle \rightarrow[[b]\left\langle k\right\rangle ]\left\langle m\right\rangle $}

$g\,=\, map\, f$

$f\,::\,[a]\left\langle k\right\rangle \rightarrow[b]\left\langle k\right\rangle $

$r\,=\, g\, s$
\end{lyxcode}
And we apply the transformation\emph{ M }$S^{-1}$ $R^{-1}$\emph{k}
to \emph{s}:

$R^{-1}\, k\, t(s)=[[a]\left\langle 1\right\rangle ]\left\langle k.m\right\rangle $

$M\, S^{-1}\,[[a]\left\langle 1\right\rangle ]\left\langle k.m\right\rangle =[a]\left\langle k.m\right\rangle $

\emph{$t(s')=[a]\left\langle k.m\right\rangle $}

So we obtain
\begin{lyxcode}
$s'\,::\,=[a]\left\langle k.m\right\rangle $
\end{lyxcode}
As \emph{g'} is applied to \emph{s'}, we obtain 
\begin{lyxcode}
$g'\,::\,[a]\left\langle k.m\right\rangle \rightarrow\left\{ [[b]\left\langle k\right\rangle ]\left\langle m\right\rangle \right\} $
\end{lyxcode}
Now we use inference on \emph{map}:
\begin{lyxcode}
g'~s'~=~map~f'~s'

map$_{g'}$~::~(t\_1~->~t\_2)~->~{[}t\_1{]}~->~{[}t\_2{]}

map$_{g'}$~::~(a~->~\emph{\{{[}b{]}<k>\}})~->~{[}a{]}<k{*}m>~->~\emph{\{{[}{[}b{]}<k>{]}<m>\}}

map$_{g'}$~::~(a~->~\emph{\{{[}b{]}<k>\}})~->~{[}a{]}<k{*}m>~->~\emph{M~}$S^{-1}$~$R^{-1}$\emph{~k~\{{[}{[}b{]}<k>{]}<m>\}}

map$_{g'}$~::~(a~->~\emph{\{{[}b{]}<k>\}})~->~{[}a{]}<k{*}m>~->~{[}b{]}<k{*}m>

map$_{g'}$~::~(a~->~\emph{\{}$V^{-1}\, k.m$~\emph{{[}b{]}<k{*}m>\}})~->~{[}a{]}<k{*}m>~->~{[}b{]}<k{*}m>

map$_{g'}$~::~(a~->~b)~->~{[}a{]}<k{*}m>~->~{[}b{]}<k{*}m>

$\Rightarrow$~f'~::~a~->~b

$\Rightarrow$~r'~::~{[}b{]}<k{*}m>
\end{lyxcode}
to express \emph{f'} as a function of \emph{f} , we need a \emph{toVector
k x} function
\begin{lyxcode}
$\mbox{toVector}\,::\,\mbox{Int}\, k\Rightarrow k\rightarrow a\rightarrow[a]\left\langle k\right\rangle $
\end{lyxcode}
The most intuitive implementation seems to be
\begin{lyxcode}
toVector~::~k~x~=~replicate~k~x
\end{lyxcode}
Similarly, we need \emph{fromVector k x} (although we don't really
need \emph{k})
\begin{lyxcode}
$\mbox{fromVector}\,::\,\mbox{Int}\, k\Rightarrow k\rightarrow[a]\left\langle k\right\rangle \rightarrow a$
\end{lyxcode}
The most intuitive implementation seems to be
\begin{lyxcode}
fromVector~k~(x:\_)~=~x
\end{lyxcode}
With these, we simply say
\begin{lyxcode}
$f'\, x\,=\,\mbox{fromVector}\, k\,\left(f\,\left(\mbox{toVector}\, k\, x\right)\right)$
\end{lyxcode}

\subsubsection{Correctness condition}

In general, the above transformation does not necessarily preserve
the computation. However, we can see that a sufficient condition to
preserves the computation is that\emph{ map f' = f}:

\begin{lemma}\label{ConditionOnReducedDim}

Mapping \emph{f'} over \emph{s'} preserves the computation of mapping
\emph{f} over \emph{s} iff 

\emph{f = map h}

\end{lemma}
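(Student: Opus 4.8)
The plan is to first pin down what ``preserves the computation'' means at the level of values, then reduce the claimed equivalence to the single algebraic identity $f = map\, f'$, and finally reconcile that identity with the existence of an $h$ satisfying $f = map\, h$.

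First I would spell out the value-level realisation of the type transformation $M\, S^{-1}\, R^{-1}\, k$ that carries $[[a]\langle k\rangle]\langle m\rangle$ to $[a]\langle k.m\rangle$. On values, $R^{-1}\, k$ re-partitions the $m$ blocks of length $k$ into $k.m$ blocks of length $1$, and $M\, S^{-1}$ then strips each singleton wrapper; so the realisation is exactly the flattening map $\phi\,[v_1,\dots,v_m] = v_1\mathbin{+\!\!+}\dots\mathbin{+\!\!+} v_m$ (ordinary list concatenation), and the transformation of $t_r = [[b]\langle k\rangle]\langle m\rangle$ is realised by the same $\phi$. Since $s' = \phi\, s$, $r = map\, f\, s$ and $r' = map\, f'\, s'$, the statement ``mapping $f'$ over $s'$ preserves the computation of mapping $f$ over $s$'' means precisely that $\phi\,(map\, f\, s) = map\, f'\,(\phi\, s)$ holds for \emph{every} input $s :: [[a]\langle k\rangle]\langle m\rangle$. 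This universal reading is essential: for one fixed $s$ the only-if direction is false, since preservation would then constrain $f$ only at the finitely many blocks occurring in $s$.

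Next I would strip the flattening. With $s = [v_1,\dots,v_m]$ the left side is $(f\, v_1)\mathbin{+\!\!+}\dots\mathbin{+\!\!+}(f\, v_m)$ and, by naturality of $map$ over $(\mathbin{+\!\!+})$, the right side is $(map\, f'\, v_1)\mathbin{+\!\!+}\dots\mathbin{+\!\!+}(map\, f'\, v_m)$. Because $f :: [a]\langle k\rangle\to[b]\langle k\rangle$ and $map\, f'$ both send length-$k$ vectors to length-$k$ vectors, these two concatenations of $m$ length-$k$ blocks coincide iff they coincide blockwise; and as $s$ ranges over all inputs each $v_i$ independently ranges over all of $[a]\langle k\rangle$. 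Hence the correctness condition is equivalent to $f\, v = map\, f'\, v$ for all $v :: [a]\langle k\rangle$, i.e. to $f = map\, f'$.

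It then remains to match this with the statement. For ``$\Rightarrow$'', preservation gives $f = map\, f'$, so $f = map\, h$ with $h := f'$. For ``$\Leftarrow$'', assuming $f = map\, h$ and using $\mathrm{toVector}\, k = \mathrm{replicate}\, k$ and $\mathrm{fromVector}\, k = \mathrm{head}$, one computes $f'\, x = \mathrm{fromVector}\, k\,(f\,(\mathrm{replicate}\, k\, x)) = \mathrm{fromVector}\, k\,(\mathrm{replicate}\, k\,(h\, x)) = h\, x$ (legitimate since $k \ge 1$); hence $f = map\, f'$ and the previous step yields preservation. The same computation shows any witnessing $h$ must equal $f'$, so the statement is unambiguous. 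The only genuine obstacle is the first step: the lemma has no content until one commits to the data-level meaning of the type transformation and to the ``for all inputs'' reading of correctness; everything after that is the naturality of $map$ with respect to $(\mathbin{+\!\!+})$ together with length-preservation of $map$ and of $f$, which lets one peel off the concatenations and read off the blockwise condition.
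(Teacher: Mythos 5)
Your proof is correct and, for the sufficiency direction, follows essentially the same route as the paper: you compute $f'\,x = \mathrm{fromVector}\,k\,(f\,(\mathrm{toVector}\,k\,x)) = h\,x$ and then observe that flattening commutes with the map, so that the flattened image of $map\,f\,s$ equals $map\,f'\,s'$. Where you genuinely go beyond the paper is in two respects. First, you make explicit what ``preserves the computation'' means --- namely $\phi\,(map\,f\,s) = map\,f'\,(\phi\,s)$ for \emph{all} inputs $s$, with $\phi$ the concatenation realising $M\,S^{-1}\,R^{-1}\,k$ --- which the paper leaves implicit (it only states the goal $r' = \mathrm{reshapeFrom}\,k\,r$ for the given $s$). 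Second, you actually prove the ``only if'' direction: the paper claims an ``iff'' but its proof establishes only that $f = map\,h$ is sufficient. Your reduction of preservation to the blockwise identity $f = map\,f'$ (using that both $f$ and $map\,f'$ preserve block length $k$, and that the blocks range over all of $[a]\langle k\rangle$ as $s$ varies) is exactly the missing necessity argument, and your remark that the universal quantification over $s$ is essential for it is well taken. One small point in your favour over the paper: you evaluate $f\,(\mathrm{replicate}\,k\,x)$ consistently with the stated definition $\mathrm{toVector}\,k\,x = \mathrm{replicate}\,k\,x$, whereas the paper's chain $f'\,x = \mathrm{head}\,(f\,[x])$ silently treats $\mathrm{toVector}\,k\,x$ as a singleton; both give $h\,x$ in the end, but your version is the faithful one.
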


\begin{proof}

~
\begin{enumerate}
\item Observe that \emph{s' = reshapeFrom k s} and we must show that

r' = \emph{g' s' = reshapeFrom k r = reshapeFrom k g s}

\item We show that \emph{map f' s' = map h s'}

\begin{enumerate}
\item Mapping \emph{f'} to \emph{s'}:

\emph{r' = g' s' = map f' s'}

\emph{= {[}f' x1,f' x2,...,f' xk,f' y1,f' y2,...,f' yk,...,f' z1,f'
z2,...,f' zk{]}}

\item \emph{f'} is identical to \emph{h}:

\emph{f' x}

\emph{= fromVector k (f (toVector k x))}

\emph{=head (f {[}x{]})}

\emph{= head (map h {[}x{]})}

\emph{= head {[}h x{]}}

\emph{= h x}

$\Rightarrow$\emph{f' = h}

\end{enumerate}

$\Rightarrow$\emph{ r' = map f' s' = {[}h x1,h x2,...,h xk,h y1,h
y2,...,h yk,...,h z1,h z2,...,h zk{]}}

\item Mapping \emph{f} to \emph{s}:

\emph{r = g s = map f s }

\emph{= map f {[}{[}x1,x2,...,xk{]},{[}y1,y2,...,yk{]},...,{[}z1,z2,...,zk{]}{]}}

\emph{= {[}f {[}x1,x2,...,xk{]},f {[}y1,y2,...,yk{]},...,f {[}z1,z2,...,zk{]}{]},...{]}}

\emph{= {[}map h {[}x1,x2,...,xk{]},map h {[}y1,y2,...,yk{]},...,map
h {[}z1,z2,...,zk{]}{]}}

\emph{= {[}{[}h x1,h x2,...,h xk{]}, {[}h y1,h y2,...,h yk{]},...,{[}h
z1,h z2,...,h zk{]}{]}}

\item Finally, transforming \emph{r} to \emph{r'}:

\emph{reshapeFrom k r }

\emph{= reshapeFrom k {[}{[}h x1,h x2,...,h xk{]}, {[}h y1,h y2,...,h
yk{]},...,{[}h z1,h z2,...,h zk{]}{]} }

\emph{= {[}h x1,h x2,...,h xk,h y1,h y2,...,h yk,...,h z1,h z2,...,h
zk{]} }

\emph{= r'}

\end{enumerate}
\end{proof}

\subsection{Preserving the dimensionality -- map}

With the same example as above, we apply the transformation\emph{
}$R\, n\, R^{-1}\, k$ to \emph{s}:

$R^{-1}\, k\, t(s)=[[a]\left\langle 1\right\rangle ]\left\langle k.m\right\rangle $

\emph{$R\, n\,[[a]\left\langle 1\right\rangle ]\left\langle k.m\right\rangle =[a\left\langle n\right\rangle ]\left\langle k.m/n\right\rangle $}

$t(s')=[a\left\langle n\right\rangle ]\left\langle k.m/n\right\rangle $

So we obtain

$s'\,::\,[a\left\langle n\right\rangle ]\left\langle k.m/n\right\rangle $

As \emph{g'} is applied to \emph{s'}, we obtain 

$g'\,::\,[a\left\langle n\right\rangle ]\left\langle k.m/n\right\rangle \rightarrow\left\{ [b\left\langle k\right\rangle ]\left\langle m\right\rangle \right\} $

Again we use inference on \emph{map}:
\begin{lyxcode}
g'~s'~=~map~f'~s'

map$_{g'}$~::~(t\_1~->~t\_2)~->~{[}t\_1{]}~->~{[}t\_2{]}

map$_{g'}$~::~(a<n>~->~\emph{\{{[}b{]}<k>\}})~->~{[}a<n>{]}<k{*}m/n>~->~\emph{\{{[}{[}b{]}<k>{]}<m>\}}

map$_{g'}$~::~(a<n>~->~\emph{\{{[}b{]}<k>\}})~->~{[}a{]}<k{*}m>~->~$R\, n\, R^{-1}\, k$\emph{~~\{{[}{[}b{]}<k>{]}<m>\}}

map$_{g'}$~::~(a<n>~->~\emph{\{{[}b{]}<k>\}})~->~{[}a{]}<k{*}m>~->~{[}b<n>{]}<k{*}m/n>

map$_{g'}$~::~(a~->~\emph{\{}$V^{-1}\, k.m/n$~\emph{{[}b<n>{]}<k{*}m/n>\}})~->~{[}a{]}<k{*}m>~->~{[}b{]}<k{*}m>

map$_{g'}$~::~(a<n>~->~b<n>)~->~{[}a{]}<k{*}m>~->~{[}b{]}<k{*}m>

$\Rightarrow$~f'~::~a<n>~->~b<n>

$\Rightarrow$~r'~::~{[}b<n>{]}<k{*}m/n>
\end{lyxcode}
As map is independent of the size of the vector, we have 
\begin{lyxcode}
$f'\,=\, f$
\end{lyxcode}
Consequently, the computation will always be preserved.

\subsection{Increasing dimensionality -- fold}

We can easily show that if the operation on f is a fold, then increasing
the dimensionality results in applying the fold to every dimension. 

\begin{lemma}\label{FoldOnNestedList}

Repeated application of \emph{fold} to a nested list is equivalent
to applying \emph{fold} to the flattened list

\emph{fold (fold f) acc {[}}\textcolor{blue}{\emph{{[}x1,x2,...xk{]}}}\emph{,}\textcolor{red}{\emph{{[}y1,y2,...yk{]}}}\emph{,...{]}
= fold f acc {[}x1,x2,...,xk,y1,y2,...,yk,...,z1,z2,...,zk{]}}

\end{lemma}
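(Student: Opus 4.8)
The plan is to prove the statement by induction on the outer list, using the standard universal property of \emph{foldr} (which is what the paper calls \emph{fold}). First I would fix notation: write the nested list as a list of inner lists $[\ell_1,\ell_2,\ldots,\ell_p]$, where each $\ell_i$ is itself a list, and write $\mathit{flatten}\,[\ell_1,\ldots,\ell_p] = \ell_1 \mathbin{+\!\!+} \ell_2 \mathbin{+\!\!+} \cdots \mathbin{+\!\!+} \ell_p$ for list concatenation. The claim is then
\[
\mathit{fold}\,(\mathit{fold}\,f)\,\mathit{acc}\,[\ell_1,\ldots,\ell_p] \;=\; \mathit{fold}\,f\,\mathit{acc}\,(\mathit{flatten}\,[\ell_1,\ldots,\ell_p]).
\]
In the example of the statement the $\ell_i$ all have the same length $k$, but that restriction plays no role in the argument, so I would prove the general version.

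The key observation I would isolate as a sub-lemma is the interaction of \emph{foldr} with concatenation: for any list $xs$ and $ys$,
\[
\mathit{fold}\,f\,\mathit{acc}\,(xs \mathbin{+\!\!+} ys) \;=\; \mathit{fold}\,f\,(\mathit{fold}\,f\,\mathit{acc}\,ys)\,xs.
\]
This is itself a one-line induction on $xs$: the base case $xs=[\,]$ is immediate from the definition of \emph{foldr} on the empty list, and the cons case $xs = x:xs'$ uses the defining equation $\mathit{fold}\,f\,\mathit{acc}\,(x:rest) = f\,x\,(\mathit{fold}\,f\,\mathit{acc}\,rest)$ together with the induction hypothesis. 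With this sub-lemma in hand, the main induction on the outer list is routine: for the empty outer list both sides reduce to $\mathit{acc}$; for the cons case $[\ell_1,\ell_2,\ldots,\ell_p] = \ell_1 : \mathit{rest}$, the left-hand side unfolds one step of the outer \emph{fold} to $\mathit{fold}\,f\,\bigl(\mathit{fold}\,(\mathit{fold}\,f)\,\mathit{acc}\,\mathit{rest}\bigr)\,\ell_1$, the induction hypothesis rewrites the accumulator as $\mathit{fold}\,f\,\mathit{acc}\,(\mathit{flatten}\,\mathit{rest})$, and the concatenation sub-lemma collapses $\mathit{fold}\,f\,\bigl(\mathit{fold}\,f\,\mathit{acc}\,(\mathit{flatten}\,\mathit{rest})\bigr)\,\ell_1$ back into $\mathit{fold}\,f\,\mathit{acc}\,(\ell_1 \mathbin{+\!\!+} \mathit{flatten}\,\mathit{rest}) = \mathit{fold}\,f\,\mathit{acc}\,(\mathit{flatten}\,[\ell_1:\mathit{rest}])$.

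The only real subtlety — and the step I would flag as the main obstacle — is the handedness of the fold and the order in which the accumulator threads through the nesting. Because \emph{fold} here is \emph{foldr}, the inner fold over $\ell_1$ must be applied \emph{after} (i.e. with its seed being the result of) the fold over $\mathit{rest}$, which is exactly why the concatenation lemma has the form $\mathit{fold}\,f\,(\mathit{fold}\,f\,\mathit{acc}\,ys)\,xs$ rather than with $xs$ and $ys$ swapped. If one is not careful, the example in the statement (where the flattened order is $x_1,\ldots,x_k,y_1,\ldots,y_k,\ldots$, i.e. the inner lists appear left-to-right) can look inconsistent with a right fold; the resolution is that the associativity bookkeeping of \emph{foldr} over a concatenation still visits elements in left-to-right list order, so the final flattened sequence is as claimed. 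Once the concatenation lemma is stated with the correct seed placement, everything else is a mechanical two-level induction with no side conditions on the lengths of the inner lists.
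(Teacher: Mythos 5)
Your proof is correct in substance but takes a genuinely different, and more rigorous, route than the paper. The paper's proof is a direct symbolic calculation: it unfolds \emph{fold (fold f)} on the nested list, expands each inner fold, and observes that the resulting fully nested application of $f$ to $acc, x_1,\ldots,x_k,y_1,\ldots$ is literally the same expression as the expansion of \emph{fold f acc} on the flattened list. There is no induction and no auxiliary lemma. Your two-level induction with the fold--concatenation sub-lemma is a cleaner and more general argument (it needs no assumption that the inner lists share a common length $k$, and it would survive formalisation), at the cost of introducing \emph{flatten} and the append lemma explicitly. One point to correct, though: you identify \emph{fold} with \emph{foldr}, but in this section of the paper it is \emph{foldl}. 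The type signature given, $\mbox{fold}::(t_{2}\rightarrow t_{1}\rightarrow t_{2})\rightarrow t_{2}\rightarrow[t_{1}]\rightarrow t_{2}$, the expansion \emph{(f (f acc x1) x2)} appearing in the paper's own proof, and the sentence immediately following the lemma (``we only consider the left fold'') all fix this; indeed \emph{fold (fold f)} only typechecks when \emph{fold} is \emph{foldl}, since \emph{foldl f} has exactly the accumulator-first shape the outer fold requires. Consequently your concatenation sub-lemma should be mirrored to
\[
\mathit{fold}\,f\,\mathit{acc}\,(xs \mathbin{+\!\!+} ys) \;=\; \mathit{fold}\,f\,\bigl(\mathit{fold}\,f\,\mathit{acc}\,xs\bigr)\,ys,
\]
i.e.\ the seed for the suffix is the fold over the prefix, and in the cons case of the outer induction the inner fold over $\ell_1$ is performed \emph{first}, its result seeding the fold over the rest. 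With that orientation your induction goes through verbatim and yields exactly the left-to-right flattened order claimed in the statement.
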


\begin{proof}

~

\emph{fold (fold f) acc {[}}\textcolor{blue}{\emph{{[}x1,x2,...xk{]}}}\emph{,}\textcolor{red}{\emph{{[}y1,y2,...y{]}}}\emph{,...,{[}z1,z2,...,zk{]}{]} }

\emph{= (fold f ... }\textcolor{red}{\emph{(fold f}}\emph{ }\textcolor{blue}{\emph{(fold
f acc {[}x1,x2,...,xk{]})}}\emph{ }\textcolor{red}{\emph{{[}y1,y2,...,yk{]})}}\emph{
... {[}z1,z2,...,zk{]})}

\emph{= (fold f ...}\textcolor{red}{\emph{(fold f}}\emph{ }\textcolor{blue}{\emph{(f
... (f (f acc x1) x2) ... xk) }}\textcolor{red}{\emph{{[}y1,y2,...,yk{]})}}\emph{
... {[}z1,z2,...,zk{]})}

\emph{= (f ... (f (f ( ... }\textcolor{red}{\emph{(f ... (f (f}}\emph{
}\textcolor{blue}{\emph{(f ... (f (f acc x1) x2) ... xk)}}\emph{ }\textcolor{red}{\emph{y1)
y2) ... yk)}}\emph{ ... ) z1) z2) ... zk)}

\emph{= fold f acc {[}x1,x2,...,xk,y1,y2,...,yk,...,z1,z2,...,zk{]}}

\end{proof}

Furthermore, as we consider a streaming operations, we only consider
the left fold (\emph{foldl}). 

We assume the same program as for \emph{map} above:
\begin{lyxcode}
$s\,::\,[a]\left\langle n\right\rangle $

$g\,::\,[a]\left\langle n\right\rangle \rightarrow b$

$r\,::\, b$

$f\,::\, b\rightarrow a\rightarrow b$

$acc\,::\, b$

$r\,=\, g\, s$

$g\,=\,\mbox{fold}\, f\, acc$
\end{lyxcode}
For completeness:
\begin{lyxcode}
$\mbox{fold}\,::\,(t_{2}\rightarrow t_{1}\rightarrow t_{2})\rightarrow t_{2}\rightarrow[t_{1}]\left\langle m\right\rangle \rightarrow t_{2}$

\end{lyxcode}
We transform the top-level type:

$t_{s}'$\emph{= R k M S }$t_{s}$ \emph{= R k M S {[}a{]}<n> = {[}{[}a{]}<k>{]}<n/k>}

So we obtain
\begin{lyxcode}
$s'\,::\,[a\left\langle k\right\rangle ]\left\langle n/k\right\rangle $
\end{lyxcode}
As \emph{g'} is applied to \emph{s'}, we obtain 
\begin{lyxcode}
$g'\,::\,[a\left\langle n\right\rangle ]\left\langle k.m/n\right\rangle \rightarrow\left\{ b\right\} $
\end{lyxcode}
Using inference on \emph{fold}:
\begin{lyxcode}
g'~s'~=~fold~f'~acc~s'

fold$_{g'}$~::~(t\_2~->~t\_1~->~t\_2)~->~t\_2~->~{[}t\_1{]}<m>~->~t\_2

fold$_{g'}$~::~(\emph{\{b\}}~->~\emph{\{a\}}->\emph{\{b\}})~->~\emph{\{b\}}~->~{[}a{]}<k><n/k>~->~\emph{\{b\}}

fold$_{g'}$~::~(\emph{\{b\}~->~\{}$V^{-1}\, n/k$~\emph{{[}a{]}<k><n/k>\}}~->~\emph{\{b\}})~->~{[}a{]}<k><n/k>~->~\emph{\{b\}}

fold$_{g'}$~::~(\emph{\{b\}~->}~{[}a{]}<k>~->~\emph{\{b\}})~->~{[}a{]}<k><n/k>~->~\emph{\{b\}}
\end{lyxcode}
At this point, the types are valid, so no further transformation is
required
\begin{lyxcode}
fold$_{g'}$~::~(b~->~{[}a{]}<k>~->~b)~->~{[}a{]}<k><n/k>~->~b

$\Rightarrow$~f'~::~b~->~{[}a{]}<k>~->~b

$\Rightarrow$~r'~::~b

$\Rightarrow$~acc'~::~b
\end{lyxcode}
To transform \emph{f} into \emph{f'}:
\begin{lyxcode}
$f'\,=\, fold\, f$
\end{lyxcode}

\subsection{Decreasing dimensionality -- fold}

We assume the same program as for map above:
\begin{lyxcode}
$s\,::\,[a\left\langle k\right\rangle ]\left\langle m\right\rangle $

$g\,::\,[a\left\langle k\right\rangle ]\left\langle m\right\rangle \rightarrow b$

$r\,::\, b$

$f\,::\, b\rightarrow a\left\langle k\right\rangle \rightarrow b$

$acc\,::\, b$

$r\,=\, g\, s$

$g\,=\,\mbox{fold}\, f\, acc$
\end{lyxcode}
For completeness:
\begin{lyxcode}
$\mbox{fold}\,::\,(t_{2}\rightarrow t_{1}\rightarrow t_{2})\rightarrow t_{2}\rightarrow[t_{1}]\left\langle m\right\rangle \rightarrow t_{2}$
\end{lyxcode}
We transform the top-level type:

$t_{s}'$\emph{= $M\, S^{-1}\, R^{-1}\, k$ }$t_{s}$ \emph{= $M\, S^{-1}\, R^{-1}\, k$
{[}a{]}<k><m> = {[}a{]}<k.m>}

So we obtain
\begin{lyxcode}
$s'\,::\,[a]\left\langle k.m\right\rangle $

$s'\,=\,\mbox{flatten}\, s$
\end{lyxcode}
As \emph{g'} is applied to \emph{s'}, we obtain 
\begin{lyxcode}
$g'\,::\,[a]\left\langle k.m\right\rangle \rightarrow\left\{ b\right\} $
\end{lyxcode}
Using inference on \emph{fold}:
\begin{lyxcode}
g'~s'~=~fold~f'~acc~s'

fold$_{g'}$~::~(t\_2~->~t\_1~->~t\_2)~->~t\_2~->~{[}t\_1{]}<m>~->~t\_2

fold$_{g'}$~::~(\emph{\{b\}}~->~\emph{\{{[}a{]}<k>\}}->\emph{\{b\}})~->~\emph{\{b\}}~->~{[}a{]}<k{*}m>~->~\emph{\{b\}}

fold$_{g'}$~::~(\emph{\{b\}~->~\{}$V^{-1}\, k.m$~\emph{{[}a{]}<k.m>\}}~->~\emph{\{b\}})~->~{[}a{]}<k{*}m>~->~\emph{\{b\}}

fold$_{g'}$~::~(\emph{\{b\}~->}~a~->~\emph{\{b\}})~->~{[}a{]}<k{*}m>~->~\emph{\{b\}}
\end{lyxcode}
At this point, the types are valid, so no further transformation is
required
\begin{lyxcode}
fold$_{g'}$~::~(b~->~a~->~b)~->~{[}a{]}<k{*}m>~->~b

$\Rightarrow$~f'~::~b~->~a~->~b

$\Rightarrow$~r'~::~b

$\Rightarrow$~acc'~::~b
\end{lyxcode}
To transform \emph{f} into \emph{f'}:
\begin{lyxcode}
$f'\, acc\, x\,=\, f\, acc\,\left(\mbox{toVector}\, k\, x\right)$
\end{lyxcode}

\subsubsection{Correctness condition}

In the case of fold, ``preserves the computation'' means ``produces
an identical result'', as from the perspective of the type transformation,
the type \emph{b} is opaque. In general, folding \emph{f'} over \emph{s'}
is not equal to folding \emph{f} over \emph{s}. However, a sufficient
condition for equality is this:

\begin{lemma}

Folding \emph{f'} over \emph{s'} is equal to folding \emph{f} over
\emph{s} iff 

\emph{f = fold h}

\end{lemma}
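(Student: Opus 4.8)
The plan is to follow the template of the proof of Lemma~\ref{ConditionOnReducedDim}, replacing its element-by-element reasoning with a single appeal to Lemma~\ref{FoldOnNestedList}, which already packages exactly the flatten/nesting bookkeeping needed here. The two directions are then largely mechanical, with one genuine subtlety about \emph{which} $f'$ the transformation must produce.

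For the ``if'' direction, assume $f = \mathrm{fold}\,h$ for some $h :: b \rightarrow a \rightarrow b$. Recall that $s'$ is the flattening of $s$, i.e. the concatenation of the $m$ inner $k$-vectors of $s$ ($s' = \mathrm{flatten}\,s$, from the preceding subsection). Then
\[
\mathrm{fold}\,f\,acc\,s \;=\; \mathrm{fold}\,(\mathrm{fold}\,h)\,acc\,s \;=\; \mathrm{fold}\,h\,acc\,s',
\]
where the last equality is Lemma~\ref{FoldOnNestedList}. It therefore suffices to exhibit a transformed function $f'$ with $\mathrm{fold}\,f'\,acc'\,s' = \mathrm{fold}\,h\,acc\,s'$; the choice is $f' = h$, together with $acc' = acc$ (the type $b$ is opaque and untouched). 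This $f'$ has exactly the signature $b \rightarrow a \rightarrow b$ that the type-level inference of the previous subsection assigned to $f'$, so it is a legal instance of that inference, and with it $r' = \mathrm{fold}\,f'\,acc'\,s' = \mathrm{fold}\,f\,acc\,s = r$, so the computation is preserved. For the ``only if'' direction, read the equality as holding for every instantiation of the program's input (and accumulator). Specialise to $m = 1$: then $s = [v]$ with $v :: a\langle k\rangle$ and $s' = v$, so the hypothesis gives $f\,acc\,v = \mathrm{fold}\,f\,acc\,[v] = \mathrm{fold}\,f'\,acc\,v$ for all $v$ and all $acc$; hence $f = \mathrm{fold}\,f'$, i.e. $f = \mathrm{fold}\,h$ with $h := f'$.

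The main obstacle is not the calculation but pinning down the correct $f'$. Unlike the \emph{map} case, where $\mathrm{fromVector}\,k\,(f\,(\mathrm{toVector}\,k\,x))$ collapses to $h\,x$ thanks to the final \emph{head}, the naive recipe $f'\,acc\,x = f\,acc\,(\mathrm{toVector}\,k\,x)$ does \emph{not} work here: with $\mathrm{toVector} = \mathrm{replicate}$ and $f = \mathrm{fold}\,h$ it re-applies $h$ a full $k$ times per stream element, so it overcounts and cannot reproduce the original fold. The real content of the lemma is that a computation-preserving transformed function exists precisely when $f$ factors as $\mathrm{fold}\,h$, in which case that function is $f' = h$ read off from the factorisation; the proof must therefore make this substitution explicitly rather than use the generic recipe, and then verify (as above, via Lemma~\ref{FoldOnNestedList}) that it is consistent with the inferred signature and yields $r' = r$.
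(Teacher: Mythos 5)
Your proof is correct, and for the substantive (``if'') direction it rests on the same pillar as the paper's: Lemma~\ref{FoldOnNestedList} turns $\mathrm{fold}\,(\mathrm{fold}\,h)\,acc\,s$ into $\mathrm{fold}\,h\,acc\,s'$, and everything else is bookkeeping. There are, however, two genuine differences. First, the paper only argues sufficiency (its two numbered steps both serve the ``if'' direction), whereas you also supply the ``only if'' direction by specialising to $m=1$ so that $s'=v$ and the hypothesis collapses to $f\,acc\,v=\mathrm{fold}\,f'\,acc\,v$, forcing $f=\mathrm{fold}\,f'$; this is a clean argument the paper omits despite stating the lemma as an ``iff'' (it does require reading the equality as universally quantified over inputs and accumulator, which you state explicitly and which matches the intended reading). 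Second, where the paper derives $f'\simeq h$ from its recipe $f'\,acc\,x = f\,acc\,(\mathrm{toVector}\,k\,x)$ via the step ``$f\,acc\,(\mathrm{toVector}\,k\,x)=f\,acc\,[x]$'', you instead take $f'=h$ directly from the factorisation $f=\mathrm{fold}\,h$. Your accompanying criticism is well founded: with the paper's stated implementation $\mathrm{toVector}\,k\,x=\mathrm{replicate}\,k\,x$ and $f=\mathrm{fold}\,h$, the recipe applies $h$ a total of $k$ times per stream element, so the paper's proof is only valid if $\mathrm{toVector}\,k\,x$ is read as the singleton $[x]$ (i.e.\ $k=1$), which contradicts its declared type $[a]\langle k\rangle$. (Note the analogous step in Lemma~\ref{ConditionOnReducedDim} survives $\mathrm{replicate}$ only because $\mathrm{fromVector}$ takes the head; no such rescue exists for \emph{fold}.) Your version is therefore not just an alternative route but a repair; the only thing I would add is an explicit remark that the inferred signature $f'::b\rightarrow a\rightarrow b$ does not determine $f'$ uniquely, so choosing $f'=h$ is legitimate precisely because the lemma is about the existence of a computation-preserving instance.
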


\begin{proof}

~
\begin{enumerate}
\item \emph{foldl f' acc s' = foldl h acc s'}

\emph{foldl f' acc {[}x1,x2,...,xk,y1,y2,...,yk,...,z1,z2,...,zk{]}}

{[}def. of f'{]}

\emph{= foldl (\textbackslash{}acc x -> f acc (toVector k x)) acc
{[}x1,x2,...,xk,y1,y2,...,yk,...,z1,z2,...,zk{]}}

{[}def of toVector{]}

\emph{= foldl (\textbackslash{}acc x -> f acc {[}x{]}) acc {[}x1,x2,...,xk,y1,y2,...,yk,...,z1,z2,...,zk{]}}

{[}def of foldl{]} foldl h acc {[}x{]} = h acc x

\emph{= foldl (\textbackslash{}acc x -> h acc x) acc {[}x1,x2,...,xk,y1,y2,...,yk,...,z1,z2,...,zk{]}}

{[}$\eta$ conversion{]}

\emph{= foldl h acc {[}x1,x2,...,xk,y1,y2,...,yk,...,z1,z2,...,zk{]}}

{[}def. of foldl{]}

\emph{= (h (... (h (h (... (h ... (h (h (fh (h ... (h (h acc x1) x2)
... xk) y1) y2) ... yk) ...) z1) z2) ...) zk)}

\item \emph{foldl h acc s' = foldl f acc s}

\emph{foldl f acc s}

{[}def. of f{]}

\emph{= foldl (foldl h) acc s}

{[}Lemma \ref{FoldOnNestedList} + def. of s'{]}

\emph{= foldl h acc s'}

\end{enumerate}
\end{proof}

\subsection{About \emph{zip} and \emph{unzip} }

We use \emph{zip} and \emph{unzip} to change nested lists of tuples
into tuples of nested lists. 
\begin{lyxcode}
zip~::~{[}a{]}<n>~->~{[}b{]}<n>~->~{[}(a,b){]}<n>

unzip~::~{[}(a,b){]}<n>~->~({[}a{]}<n>,{[}b{]}<n>)
\end{lyxcode}
The same type transformation must be applied to both arguments, e.g.
for \emph{R k {[}a{]}<n>}. In order to preserve the computation, it
is quite clear that
\begin{lyxcode}
zip'~::~{[}{[}a{]}<k>{]}<n/k>~->~{[}{[}b{]}<k>{]}<n/k>~->~{[}{[}(a,b){]}<k>{]}<n/k>
\end{lyxcode}
can be implemented in terms of \emph{zip} as
\begin{lyxcode}
zip'~xs'~ys'~=~map~(\textbackslash{}(x,y)~->~zip~x~y)~(zip~xs'~ys')
\end{lyxcode}
and similar for \emph{unzip}.

To simplify the discussion, we introduce a variant of \emph{zip},
\emph{zipt}, which takes a single tuple as argument, and a corresponding
\emph{unzipt}.
\begin{lyxcode}
zipt~::~({[}a{]}<n>,{[}b{]}<n>)~->~{[}(a,b){]}<n>

zipt~(xs,ys)~=~zip~xs~ys
\end{lyxcode}
and
\begin{lyxcode}
unzipt~::~{[}(a,b){]}<n>~->~({[}a{]}<n>,{[}b{]}<n>)

unzipt~ltups~=~(map~fst~ltups,~map~snd~ltups)
\end{lyxcode}
then \emph{zipt'} becomes 
\begin{lyxcode}
zipt'~::~({[}a{]}<k><n/k>,{[}b{]}<k><n/k>)~->~{[}(a,b){]}<k><n/k>
zipt'~tup~=~map~zipt~(zipt~tup)
\end{lyxcode}
and similar for \emph{unzipt}.

\section{Conclusion}

The approach described allows to transform programs consisting of
combinations of \emph{map}, \emph{foldl} and \emph{zip} based on transformation of the types
of the vectors on which the map or fold acts. 

We have shown that the the set \emph{V(a,n)} of vectors of type $a$ and size $n$ is closed under the proposed operations for transforming the vector types, \emph{S},\emph{M} and \emph{R}, with the corollary that every combination of the transformations is reversible.

We have shown that, for programs consisting of opaque functions and the operations \emph{map}, \emph{foldl} and \emph{zip}, the program transformations can be automatically derived from the type transformations.

This mechanism allows to generate correct-by-construction variants of the programs. The purpose of this works is to allow automatic selection of the variant most suitable for a given platform through optimisation against a platform cost model.

This work is supported by the EPSRC through the TyTra project (EP/L00058X/1).

\bibliographystyle{apalike}
\bibliography{infering_programs_from_type_transformations}

\end{document}